\colorlet{green}{green!50!black}
\begin{document}
	
	\def\Nset{\mathbb{N}}
	\def\Ascr{\mathcal{A}}
	\def\Bscr{\mathcal{B}}
	\def\Cscr{\mathcal{C}}
	\def\Dscr{\mathcal{D}}
	\def\Escr{\mathcal{E}}
	\def\Fscr{\mathcal{F}}
	\def\Hscr{\mathcal{H}}
	\def\Iscr{\mathcal{I}}
	\def\Lscr{\mathcal{L}}
	\def\Mscr{\mathcal{M}}
	\def\Nscr{\mathcal{N}}
	\def\Pscr{\mathcal{P}}
	\def\Qscr{\mathcal{Q}}
	\def\Rscr{\mathcal{R}}
	\def\Sscr{\mathcal{S}}
	\def\Tscr{\mathcal{T}}
	\def\Uscr{\mathcal{U}}
	\def\Wscr{\mathcal{W}}
	\def\Xscr{\mathcal{X}}
	\def\cupp{\stackrel{.}{\cup}}
	\def\bold{\bf\boldmath}
	
	\newcommand{\rouge}[1]{\textcolor{red}{\tt \footnotesize #1}}
	\newcommand{\boldheader}[1]{\smallskip\noindent{\bold #1:}\quad}
	\newcommand{\PP}{\mbox{\slshape P}}
	\newcommand{\NP}{\mbox{\slshape NP}}
	\newcommand{\RP}{\mbox{\slshape RP}}
	\newcommand{\coNP}{\mbox{\textsc{co}\slshape NP}}
	\newcommand{\opt}{\mbox{\scriptsize\rm OPT}}
	\newcommand{\ec}{\mbox{\scriptsize\rm OPT}_{\small\rm 2EC}}
	\newcommand{\lp}{\mbox{\scriptsize\rm LP}}
	\newcommand{\inn}{\mbox{\rm in}}
	\newcommand{\deff}{\mbox{\rm sur}}
	\newcommand{\MAXSNP}{\mbox{\slshape MAXSNP}}
	\newtheorem*{theorem*}{Theorem}
	\newtheorem{theorem}{Theorem}
	\numberwithin{theorem}{section}
	\newtheorem*{lemma*}{Lemma}
	\newtheorem{lemma}[theorem]{Lemma}
	\newtheorem{conjecture}[theorem]{Conjecture}
	\newtheorem{problem}[theorem]{Problem}
	\newtheorem*{corollary*}{Corollary}
	\newtheorem{corollary}[theorem]{Corollary}
	\newtheorem{proposition}[theorem]{Proposition}
	\newtheorem{definition}[theorem]{Definition}
	
	\makeatletter
	\renewenvironment{proof}[1][\proofname]{\par
		\pushQED{\qed}%
		\normalfont \topsep6\p@\@plus6\p@\relax
		\trivlist
		\item\relax
		{\bfseries  
			#1\@addpunct{.}}\hspace\labelsep\ignorespaces 
	}{%
		\popQED\endtrivlist\@endpefalse
	}
	\makeatother

	\def\leteq{\vcentcolon =}
	\def\cupp{\stackrel{.}{\cup}}
	
	
	\title{Odd Paths, Cycles and $T$-joins:\\ Connections and Algorithms}
	\author[1,2]{Ildik\'o Schlotter} 
	\author[3]{Andr\'as Seb\H o}
	\affil[1]{Centre for Economic and Regional Studies, Hungary; \texttt{schlotter.ildiko@krtk.hu}}
	\affil[2]{Budapest University of Technology and Economics, Hungary}
	\affil[3]{CNRS, Univ. Grenoble Alpes, France; \texttt{andras.sebo@grenoble-inp.fr}}
	\date{}

	\maketitle
	\begin{abstract} Minimizing the weight of an edge set satisfying parity constraints is a challenging branch of combinatorial optimization as witnessed by the binary hypergraph chapter of Alexander Schrijver's book ``Combinatorial Optimization" (Chapter 80). This area contains relevant graph theory  problems including open cases of the \textsc{Max Cut} problem and some multiflow problems. We clarify the interconnections between some of these problems and establish three levels of difficulties. 
		On the one hand, we prove that the \textsc{Shortest Odd Path} problem in undirected graphs without cycles of negative total weight and several related problems are $\NP$-hard, settling a long-standing open question asked by Lov\'asz (Open Problem~27 in Schrijver's book ``Combinatorial Optimization''). On the other hand, we provide an efficient algorithm to the closely related and well-studied \textsc{Minimum-weight Odd  $T$-Join} problem for non-negative weights: our algorithm runs in FPT time parameterized by~$c$, where $c$ is the number of connected components in some efficiently computed minimum-weight $T$-join.
  If negative weights are also allowed, then finding a minimum-weight odd  $\{s,t\}$-join is equivalent to the \textsc{Minimum-weight Odd $T$-Join} problem for arbitrary weights, whose complexity is still only conjectured to be polynomial-time solvable. 
		The analogous problems for digraphs are also considered. 
	\end{abstract}

	\section{Introduction}
	\label{sec:intro}
	
	The {\sc Minimum-weight Odd $T$-Join problem} (MOTJ) asks for an odd-cardinality $T$-join of minimum weight for a given subset~$T$ of vertices in an undirected, edge-weighted graph.
	The MOTJ problem is the graph special case of optimization in  binary hypergraphs. This area contains relevant  problems on graphs, including open cases of the {\sc Max Cut} problem, or some multiflow problems.  Stimulating minimax conjectures of Seymour's have been solved in special cases (see e.g., Guenin and Abdi~\cite{A,AG,Guenin}) but  optimization algorithms were not well-known,  not even for cases where the  minimax theorems conjectured by Seymour hold. 
	
	In this paper we study a handful of interconnected optimization problems  in undirected, edge-weighted graphs that  involve parity constraints on certain  edge sets. Such problems are considered in a general framework under the term of ``binary hypergraphs'' \cite[Chapter 80]{SYB}, subject to a huge number of deep results and conjectures since the seventies until now (see e.g.,~\cite{PDS,Geelen,Guenin1,Guenin,AG,A}; those published before 2002 are summarized in Schrijver's book~\cite{SYB}). A first round of problems like $T$-joins or odd cycles for non-negative edge-weights have been studied in the last century \cite{E,GP}. Then time has come for adding more binary constraints \cite{Guenin1,Guenin,Geelen}, bringing in new results and new challenges. 
	
	In this work we consider the main algorithmic challenges.  Among other variants, we study the problems of finding a minimum-weight odd $T$-join (MOTJ), a shortest odd cycle (SOC)  or a shortest odd path between two given vertices (SOP) in an undirected graph with \emph{conservative} weights, that is, 
	when negative weights but no  cycles with negative total weight 
 are allowed. Our results are the following:
	
	\begin{itemize}
		\item[(a)] The variant of SOC where the solution has to contain a given vertex of the graph is $\NP$-hard, {\em implying $\NP$-hardness for the SOP problem as well}. The complexity of the latter has been an open question by Lov\'asz (Open Problem~27 in Schrijver's  book~\cite{SYB}) for the last more than twenty years.
		
		\item[(b)] 
		The MOTJ problem for non-negative weights {\em can be solved in $2^{|T|/2} O(n^4)$ time} on an $n$-vertex graph. Our method is based on structural properties of shortest cycles in graphs with conservative weights, yielding an algorithm for SOC that is efficient when negative-weight edges span a bounded number of components.
		
		\item[(c)] The SOC problem for conservative weights is polynomially equivalent to MOTJ, and although we do solve certain special cases in polynomial time, {\em it remains open in general}.
	\end{itemize}

	We  prove in addition that finding two openly disjoint paths between two vertices with minimum total weight in an undirected graph with conservative weights in $\NP$-hard; this problem
	has, quite surprisingly, also been open.
	A major benefit of our results is finding connections among a so far chaotic set of influential problems, and sorting them into {\em polynomial-time solvable, open and $\NP$-hard cases} (cf.\ (a), (b), and (c) respectively). We will also see that some of the analogous problems for digraphs are easily reducible to tractable problems, while some others are equivalent with the problems we prove to be $\NP$-hard (see Sections~\ref{sec:NPC} and~\ref{sec:conn}); some new related open problems also arise (see Section~\ref{sub:conc}).
	
	\smallskip
	
	The  SOP problem  contains the following classical problem  SP:
	
	\medskip
	\noindent {\bf SHORTEST PATH IN UNDIRECTED CONSERVATIVE GRAPHS (SP)}
	
	\noindent {\bf Input}: An undirected  graph $G=(V,E)$ with weights  $w:E\rightarrow\mathbb{Z}$, $s, t\in V$, and $k\in \mathbb{Z}$.
	
	\noindent {\bf Question:} Is $G$ conservative with $w$, and if so, is there a path between $s$ and $t$ of weight at most~$k$?
	
	\medskip 
	Indeed, to solve SP using SOP, simply add a new vertex~$t'$, and add both an edge and a path of two edges from~$t$ to~$t'$, each with weight~0, then find a shortest odd \mbox{$(s,t')$-path}. As it is well-known,  SP can be solved in polynomial time, but this cannot be done via shortest path algorithms for conservative digraphs: 
	
	To solve SP by using techniques developed for digraphs, we would have to include each edge in both directions, and negative edges would lead to negative cycles consisting of two edges. Moreover, the algorithms for directed graphs are all based on the fact that subpaths of shortest paths are shortest and the triangle inequality holds, which is not true in  the undirected case. 
	In fact, SP contains the problem of finding a minimum-weight perfect matching,
	and conversely, deciding conservativeness or finding a shortest $(s,t)$-path with conservative weights -- which is the same as finding a shortest $\{s, t\}$-join with these weights -- can be easily reduced to weighted matching problems~\cite{E}, \cite[Section~29.2]{SYB}.  This establishes polynomial-time solvability for SP, but not as easily as  the analogous algorithms for directed graphs.

	\smallskip
	Requiring odd cardinality from the paths to be optimized on, will lead, as mentioned, to $\NP$-completeness.
	In fact, our $\NP$-hardness proof for SOP shows also the $\NP$-hardness of the \textsc{Shortest Odd Cycle through a Point} (SOCp) problem that asks for a shortest odd  cycle containing a given vertex in an undirected graph with conservative edge weights.

	\smallskip
	However, the \textsc{Shortest Odd Cycle} (SOC) problem of finding a shortest odd-cardinality cycle in undirected conservative graphs seems significantly easier. Although 
	SOC is known to be in~$\RP$ by a randomized polynomial algorithm due to Geelen and Kapadia~\cite{Geelen}, its polynomial-time solvability remains open.
	The SOC problem has been studied from multiple aspects and under various names; one of these is MOTJ. 
	The study of~SOC and MOTJ  has lead to deep structural results:

	Seymour \cite{PDS,SYB} conjectured minimax theorems for the problem of finding a shortest odd $T$-join  if certain minors are excluded; one of these, the {\em flowing conjecture} postulates the existence of a fractional dual solution for a minimum transversal of odd $T$-joins, while the {\em cycling conjecture} bets on the existence of an integer dual solution for non-negative weights.  We need not and will not enter these linear programming aspects in this note, but it is interesting to mention that these conjectures have been solved in the $|T|\le 2$ special case by  Guenin and Abdi~\cite{A,AG,Guenin}, without caring about algorithms.   
	On the other hand, a randomized polynomial algorithm  has been given for SOC by Geelen and Kapadia~\cite{Geelen}, 
	making polynomial-time solvability plausible.  In Section~\ref{sub:oddT-join} we discuss these connections. 
	
	\medskip
	\noindent {\bf Main contribution.}
	On the positive side, in Section~\ref{sec:Tractable} 
 we propose a fixed-parameter tractable (FPT) algorithm for the \textsc{Shortest Odd Cycle} problem for conservative weights,   parameterized by the number of connected components spanned by all negative edges (Theorem~\ref{thm:alg-fpt}). 
 As a consequence, MOTJ with non-negative weights can be solved by an algorithm that first computes a minimum-weight $T$-join~$F$, 
 and runs in time $2^c O(n^4)$ where $c$ is the number of connected components in~$F$, and $n$ the number of vertices in the graph.
 
 As a further corollary, 
	Cook, Espinoza and Goycoolea's FPT result~\cite{CEG} for MOTJ with parameter~$|T|$ and non-negative weights follows,  with a different proof and slightly different time complexity  (Corollary~\ref{cor:alg-fpt}).

        The main surprise -- causing at the same time some disappointment --  is {\em the $\NP$-completeness of SOCp   (Theorem~\ref{thm:SOP}), in contrast with SOC,  which remains open}. As an immediate corollary, surprisingly, Lov\'asz's problem SOP is also $\NP$-complete (Corollary~\ref{cor:SOP}). 
 In Corollary~\ref{cor:DISP} we further obtain $\NP$-completeness of the \textsc{Disjoint Shortest Paths} (DISP) problem,  where given two vertices, $s$ and~$t$ in an undirected, conservative graph, the task is to find two openly disjoint paths between~$s$ and~$t$ with minimum total weight.
	
	Finally, we present certain connections  including equivalences between the  studied but still open problems (Theorem~\ref{thm:equiv}).
	
	\medskip
	\noindent {\bf Organization.}
	In Section~\ref{sec:prelim} we introduce the most important notations, terminology and some basic facts. 
	In Section~\ref{sec:Tractable} we make an inventory of the positive results concerning MOTJ. Besides  mentioning  some existing results and recalling the main conjecture about MOTJ, 
	simple  structural results are presented for the \textsc{Shortest Odd Cycle} problem in conservative graphs,  leading first to a polynomial algorithm  for MOTJ with non-negative weights in the case when we can find a minimum-weight $T$-join that is connected (Section~\ref{sub:algoT2}), and then to efficient algorithms for MOTJ and for SOC (Section~\ref{sub:fix}).
 Our $\NP$-hardness results for the problems SOCp, SOP, and DISP are  presented in Section~\ref{sec:NPC}.
	
	The results of the paper reveal new possibilities for considering  special cases that may deserve more focus. We put forward  their relations and some open problems concerning them,  summarized in Section~\ref{sec:conn}, together with some conclusions. 
	
	\section{Preliminaries}
	\label{sec:prelim}
	We start with basic notation for graph-theoretic concepts and for properties of edge-weight functions. We then proceed by giving the precise definitions of the problems already mentioned in Section~\ref{sec:intro} and stating some well-known facts about them that will be useful later on.

	\medskip
	\noindent
	{\bf Notation for graphs.}
	Given an undirected graph $G=(V,E)$,  for some $F \subseteq E$ and $v \in V$ let $d_F(v)$ denote the {\em degree} of~$v$ in~$F$, i.e., the  number of edges in~$F$ incident to~$v$.
	Let $V(F)$   denote the set of vertices that are incident to some edge in~$F$. Let $G[F]$ denote the subgraph of~$G$ \emph{spanned} by~$F$, that is, the graph  $(V(F),F)$. 
	
	A {\em cycle} in an undirected graph $G=(V,E)$ is a nonempty set~$C$ of edges such that $G[C]$ is connected, and $d_C(v)=2$ for each vertex~$v \in V(C)$. 
	In a digraph $G=(V,E)$, a (directed) {\em cycle} additionally satisfies that all in- and out-degrees in $G[C]$ are equal to~$1$. 
	For two distinct vertices~$s$ and~$t$ in a graph, an \emph{$(s,t)$-path} has the same definition except that 
	the two \emph{endpoints}, $s$ and~$t$,  have degree~$1$ in the undirected case; 
	in the directed case,  $s$ has in-degree~$0$ and out-degree~$1$, while $t$ has out-degree~$0$, and in-degree~$1$. 
	A cycle $C$ with $s\in V(C)$ is also considered to be an $(s,t)$-path with $s=t$. 
	For two sets of vertices $S, T\subseteq V$ of a graph, an {\em $(S,T)$-path} is an $(s,t)$-path for some $s \in S$ and $t \in T$. If $P$ is a path and $a,b\in V(P)$, then the subpath of $P$ between $a,b\in V(P)$ is denoted by $P(a,b)$. 
	
	Note that we have defined cycles and $(s,t)$-paths as edge sets. 
	With a slight abuse of terminology, a
	\emph{path} in~$G$ may also be  a subgraph spanned by an $(s,t)$-path for distinct vertices~$s$ and~$t$, and we also consider a single vertex in~$G$ to be a \emph{trivial} path.
	
	Two paths are said to be {\em vertex-disjoint} (or {\em edge-disjoint}) if they do not have a common vertex (or edge, respectively), and they are said to be {\em openly disjoint} if they can only share vertices that are endpoints on both paths.
	
	A \emph{$T$-join} in an undirected graph $G=(V,E)$ for some $T\subseteq V$  is a subset of edges $J\subseteq E$, such that $d_J(v)$ is odd if $v\in T$, and even if $v\in V\setminus T$. An $\emptyset$-join is the disjoint union of cycles; inclusionwise minimal, non-empty $\emptyset$-joins are exactly the cycles. A $T$-join with  $|T|=2$, that is, with $T=\{s,t\}\subseteq V$ is the disjoint union of an $(s,t)$-path and some cycles, so the inclusionwise minimal ones are  $(s,t)$-paths.

	A cycle, a path, a $T$-join and  generally, any edge set is {\em odd (even)} if it contains an odd (respectively, even) number of edges.

	\medskip
	\noindent
	{\bf Weight functions.}
	For a function $f:D\rightarrow \mathbb{R}$ and some~$D' \subseteq D$, let $f(D')\leteq \sum_{d \in D'} f(d)$, as usually. 
	In an optimization problem over a set of feasible edge sets (e.g., over paths between two vertices, or over all cycles),
	a \emph{$w$-minimum} solution is one that  has minimum weight according to a given edge-weight function~$w$; if $w$ is clear from the context, we might also say that such a solution is \emph{shortest}.
	
	An undirected graph~$G=(V,E)$ is {\em conservative}  with weights $w:E\rightarrow \mathbb{R}$, if  $w(C)\ge 0$ for any cycle~$C$ of~$G$.

	\smallskip
	
	For arbitrary $w:E\rightarrow \mathbb{Z}$ and  $F\subseteq E$, let  $w[F]: E\rightarrow \mathbb{Z}$ denote the function defined by 
	\[
	w[F](e)\leteq \left\{ \begin{array}{rl}
		-w(e) & \textrm{ if $e\in F$,} \\
		w(e) & \textrm{ if $e \in E \setminus F$.}
	\end{array} \right.
	\]

	Denote the \emph{symmetric difference} of two sets $X$ and $Y$ by $X\Delta Y:=(X\setminus Y)\cup (Y\setminus X)$.
	Then clearly $w[F](X)= w(X\setminus F)- w(X\cap F)=w(X\Delta F) - w(F)$ for any $X\subseteq E$. In particular, {\em $F$ is a $w$-minimum $T$-join for some vertex set~$T$, if and only if $w[F]$ is conservative} we will refer to this as {\em Guan's Lemma} (stated by Guan~\cite{G} for  the ``Chinese Postman problem'' for non-negative weights). 
	Indeed, for any cycle $X$, the set~$X\Delta F$ is also a $T$-join, therefore $w[F](X)=w(X\Delta F) - w(F)\ge 0$ by the definition of~$F$. 
	
	\medskip
	For simplicity we can and will often  suppose that $w(e)\ne 0$ for all $e\in E$, to avoid $0$-weight cycles. 
	In fact, it will be convenient to assume that the weight function is \emph{normal}, meaning that no  edge 
	has $0$ weight, no cycle has $0$ weight,
    and edge sets of different cardinality have different weights. 
	Normality will be supposed in some proofs where it does not restrict generality; it is helpful for avoiding some technical detours.
	
	We can \emph{normalize} a given conservative, rational weight function by first multiplying by the smallest common denominator, which does not change the optimal sets, increases the size of the input only polynomially, and can be carried out in polynomial time.  An integer  weight function~$w:E \rightarrow \mathbb{Z}$ is normalized by  defining $w'(e)\leteq |E|\cdot w(e)+1$ for each edge~$e \in E$. The normalized weight function~$w'$ will satisfy $w'(X) < w'(Y)$ for each pair of edge sets~$X$ and~$Y$ with $w(X) < w(Y)$, and equal-weight edge sets of different cardinality will get different $w'$-weights. 
	When searching for a minimum-weight edge set with a given property, normalization does not essentially change the problem, since at least one optimal edge set will remain optimal. 
	Note that if~$w$ is conservative, then $w'$ will also be conservative, furthermore, {\em odd sets will have different weights from even ones.}
	In particular the only $\emptyset$-join of weight~$0$ will be the empty set.
	
	\medskip
	\noindent
	{\bf Problem definitions and classical results.} Consider the following two  problems, which differ only in that the second one confines the searched cycle to contain a given vertex:   
	
	\medskip 
	\noindent {\bf SHORTEST ODD CYCLE IN CONSERVATIVE GRAPHS (SOC)}
	
	\noindent {\bf Input}: An undirected graph $G=(V,E)$ conservative with $w:E\rightarrow\mathbb{Z}$   and $k\in\mathbb{Z}$.
	
	\noindent {\bf Question:} Is there an odd cycle $C$ in~$G$ whose weight is at most $k$?
	
	\medskip 
	\noindent {\bf SHORTEST ODD CYCLE IN CONSERVATIVE GRAPHS THROUGH A POINT (SOCp)}

	\noindent {\bf Input}: An undirected graph $G=(V,E)$ conservative with $w:E\rightarrow\mathbb{Z}$ ,  $p\in V$, and $k\in\mathbb{Z}$.

	\noindent {\bf Question:} Is there an odd cycle $C$ in~$G$ with $p\in V(C)$ whose weight is at most~$k$?

	\medskip
	The following problem MOTJ is closely related to SOC: on the one hand, MOTJ is a generalization of SOC (consider the case $T=\emptyset$, which yields exactly SOC), on the other hand SOC is exactly the problem of finding the ``improving step'' for reaching an optimum in MOTJ; the two problems are therefore polynomially equivalent (cf.~Theorem~\ref{thm:equiv}). 
	
	\medskip
	\noindent {\bf MINIMUM-WEIGHT ODD $T$-JOIN (MOTJ)}
	
	\noindent {\bf Input}: An undirected  graph $G=(V,E)$ with $w:E\rightarrow\mathbb{Z}$, $T\subseteq V$, and $k\in\mathbb{Z}$.

	\noindent {\bf Question:} Is there an odd $T$-join in~$G$ with total weight at most $k$?

	\medskip
	\noindent {\bf SHORTEST ODD PATH IN CONSERVATIVE GRAPHS (SOP)}
	
	\noindent {\bf Input}: An undirected  graph $G=(V,E)$ conservative with $w:E\rightarrow\mathbb{Z}$, $s, t\in V$, and $k\in\mathbb{Z}$.
	
	\noindent {\bf Question:} Is there an odd  $(s,t)$-path in~$G$ with total weight at most $k$?

	\medskip
	Analogously to SOCp and SOP, we also define the problems SECp and SEC by replacing ``odd'' with ``even'' in the definitions.
	
	The problems that will turn out to be $\NP$-hard (SOCp and SOP) will actually already be  $\NP$-hard for weight functions with values in $\{-1,1\}$. We denote the problems restricted to such weight functions by putting $\pm 1$ in subscript, for example SOP$_{\pm 1}$ means SOP restricted to weight functions taking only values  from~$\{-1,1\}$. A subscript $+$ means a restriction to non-negative weights. 
	The following theorem summarizes well-known results:
	
	\begin{theorem}\label{thm:classical}
		SP,  SOP$_+$, SEP$_+$, SOCp$_+$,  SECp$_+$, and SOC$_+$, SEC$_+$  are polynomially solvable. 
	\end{theorem}
	
	\begin{proof} We saw in the introduction that SP can be solved in polynomial time (see \cite[Section~29.2]{SYB}). 
		SOP$_+$ and SEP$_+$ can be solved in polynomial time by the well-known  ``Waterloo folklore'' algorithm related to Edmonds' classical work on matchings~\cite{GP}, see also \cite[Section 29.11e]{SYB}. Then SOCp$_+$ on an instance $(G=(V,E),w,p,k)$ can be solved  by solving SEP$_+$ on~$(G'=(V,E\setminus\{pr\}),w',p,r,k')$ for each edge $pr\in E$ incident to $p$, where $w'$ is the restriction of $w$ to $E\setminus\{pr\}$ and $k'=k-w(pr)$. 
		We can reduce SECp$_+$ to SOP$_+$ similarly. 
		Finally, SOC$_+$ on an instance $(G=(V,E),w,k)$ can be solved by computing SOCp$_+$ on $(G,w,p,k)$ for all~$p\in V$, and SEC$_+$ can be reduced to SECp$_+$ similarly.   The execution time of all these problems is polynomial in the input size.
	\end{proof}
	
	We will see that SOC$_+$ and SEC$_+$ are actually much easier than matchings: they can be solved by using only Dijkstra's shortest path algorithm (see Proposition~\ref{prop:non-neg}). Problems concerning odd or even paths are not really different, since they can be reduced to one another by introducing a new vertex~$t'$ and an edge $tt'$. However, no such reduction is known between problems concerning odd and even cycles. In fact, even the existence of non-empty even cycles happens to be inherently more difficult, to the extent that its complexity is not yet completely settled in directed graphs; see  Proposition~\ref{prop:dir-hardness}, and in Section~\ref{sub:classic}.

	We finish the list of helpful preliminaries with further notations and observations:

	Given a graph $G= (V, E)$ and a  conservative weight function $w: E \rightarrow \mathbb{Z}$, we denote the set of edges with negative weight by~$E^-=\{e\in E: w(e)<0\}$, and let us write  $E^+ = E \setminus E^-$.
	Observe that each connected component~$K$ of~$G[E^-]$ is a tree, because $w$ is conservative on~$G$.  For any two vertices $u$ and $v$ in $K$, let $K(u,v)$ denote the unique $(u,v)$-path in $K$.  	 
	
	\begin{proposition}\label{prop:pathonepath}
		Suppose $G=(V,E)$ is conservative with $w$, and $P$ is a $w$-minimum $(u,v)$-path for some vertices $u,v\in V$. Then for each connected component~$K$ of~$G[E^-]$, either~$P$ and~$K$ are vertex-disjoint, or their intersection is a path. 
	\end{proposition}
	
	\begin{proof}
		For a contradiction, suppose that there is a connected component~$K$ of~$G[E^-]$ whose intersection with~$P$ is non-empty, and not a path. Then there exist two distinct vertices~$a$ and~$b$ in~$V(P)\cap V(K)$ so that $K(a,b)$, is edge-disjoint from $P$.
		
		Using that $w$ is conservative on~$G$, we get
		$w(P(a,b) \cup K(a,b)) \geq 0$. Since every edge in~$K$ has negative weight, this implies $w(K(a,b))< 0 < w(P(a,b))$. Then $w(P\setminus P(a,b) \cup K(a,b)) < w(P)$, contradicting the choice of $P$. 
	\end{proof}
	
	\section{Are MOTJ and SOC tractable? }\label{sec:Tractable}

	In this section we collect evidence that MOTJ is tractable in its full generality, and present some  new cases when this can be already proved. 
	
	The conjecture of polynomial-time solvability of MOTJ is first of all supported by Geelen and Kapadia's result \cite{Geelen} establishing that MOTJ belongs to $\RP$,  saving the problem from being suspected to be NP-hard (which would imply $\NP=\RP$), and suggesting the following conjecture:  
	
	\begin{conjecture} 
		\label{conj:SOC}
		MOTJ and SOC can be solved in polynomial time. 
	\end{conjecture}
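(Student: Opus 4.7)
Since Conjecture~\ref{conj:SOC} is explicitly framed as an open problem, the proposal below is speculative: it describes the routes I would pursue and the barriers I expect to meet, rather than a complete argument. My starting point would be to exploit the equivalences the referenced Theorem~\ref{thm:equiv} apparently establishes: once MOTJ, SOC, and their natural companions are shown to be polynomially equivalent, it suffices to attack whichever variant looks most tractable. I would focus on SOC with $\pm 1$ weights, because there a randomized polynomial algorithm is already available from~\cite{Geelen}, and because the structural minor-exclusion results of~\cite{Guenin} and~\cite{A} give the sharpest handles.

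The main line of attack I would pursue is the \emph{derandomization} of the Geelen--Kapadia algorithm. Their procedure reduces the existence of a short odd cycle to the non-vanishing of a polynomial (a determinant or Pfaffian of a matrix with formal indeterminate entries, one per edge), which random substitution into a sufficiently large field detects with high probability. To put SOC in P, I would search for a small, explicit hitting set tailored to polynomials produced by conservative graphs; a natural intermediate target is the case in which the negative edges form a matching (as in Theorem~\ref{thm:SOP}), since the symbolic matrix then has a highly controlled, bipartite-like structure.

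The parallel approach would be to extract an algorithm from the minimax theorems of~\cite{Guenin} and~\cite{A}: for minor-free classes excluding the forbidden obstructions, the minimum size of an odd $T$-join equals the maximum of a suitable packing of $T$-cuts, and one could attempt to convert these existential proofs into an augmenting-structure algorithm, then handle general inputs by a decomposition that isolates the forbidden minors. The hard step, and the one I expect to be the main obstacle, is exactly the missing algorithmic content of these minimax theorems: the proofs are inherently non-constructive, and even where a minimax identity is known, no augmenting-structure analogue of Edmonds' matching algorithm~\cite{E} compatible with the odd-cycle constraint has been found. Breaking this barrier --- whether through derandomization of Geelen--Kapadia or through a new combinatorial uncrossing argument for odd $T$-cut packings --- is where serious progress on Conjecture~\ref{conj:SOC} will have to live.
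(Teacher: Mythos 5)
This statement is a conjecture, and the paper offers no proof of it; it is explicitly posed as an open problem, motivated only by the randomized polynomial-time algorithm of Geelen and Kapadia~\cite{Geelen} and the equivalences of Theorem~\ref{thm:equiv}. You correctly recognize this and do not claim a proof, and the directions you sketch --- derandomizing the Geelen--Kapadia algorithm, extracting algorithmic content from the minimax theorems of~\cite{Guenin} and~\cite{A}, and leveraging the polynomial equivalence of the variants in Theorem~\ref{thm:equiv} --- are exactly the evidence and open avenues the paper itself cites in support of Conjecture~\ref{conj:SOC}. There is nothing to compare beyond that: neither you nor the authors have a proof, and your assessment of where the difficulty lies is consistent with theirs.
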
  
	
	This conjecture is equivalent with a whole range of equivalent conjectures, since MOTJ can be reduced to several special cases, including the case when weights are non-negative, or when~$|T| \leq 2$  (see Theorem~\ref{thm:equiv} and some remarks thereafter). 
	However, restricting~$|T|$ and simultaneously assuming non-negative weights seems to make the problem easier (Corollary~\ref{cor:alg-fpt}), 
	confirming Conjecture~\ref{conj:SOC} under these assumptions. In Section~\ref{sub:algoT2} we present two approaches for solving MOTJ$_+$ for $|T|=2$, and the second one is generalizable to the case when a minimum $T$-join with a bounded number of components is given.  In Section~\ref{sub:fix} we investigate the general case that leads us to an FPT algorithm for the equivalent SOC where the parameter is the  number of connected components spanned by all negative edges.  
 As a corollary we also obtain the fix parameter tractability of  MOTJ$_+$ with parameter~$|T|$, already proved by   Cook, Espinoza and Goycoolea~\cite{CEG}, with a slightly worse  dependence on~$|T|$, but better on the number or vertices.

	\subsection{\texorpdfstring{MOTJ$_+$ with a connected minimum-weight $T$-join}{MOTJ+ with a connected minimum-weight T-join}}
	\label{sub:algoT2}
	
	We start by proving that MOTJ$_+$ is polynomial-time solvable if $|T|\le 2$, a case for which Seymour's conjectures mentioned in the introduction have also been proved.
	(Guenin~\cite{Guenin}  characterized for $|T| \leq 2$, in terms of the two small excluded minors of Seymour, when inclusionwise minimal  odd $T$-joins are ``ideal'';   Abdi and Guenin \cite{A,AG}  proved that  in this special case actually a stronger minimax theorem holds.) 

    A simple $O(n^3)$ algorithm is known in this case from Cook, Espinoza, Goycoolea \cite[Proposition 5.3]{CEG} (see after Corollary~\ref{cor:alg-fpt}). Our goal here is to introduce the reader to certain structural properties of shortest odd cycles in conservative graphs that also allow us to solve MOTJ$_+$ when a minimum-weight $T$-join is connected.  These properties will then carry us further, to the case of several components.
	
	We first state a clarifying observation on inclusionwise minimal odd $T$-joins for $|T|=2$:

	\begin{lemma}[\!\cite{AG,Guenin}] 
		\label{lem:AG}
		Let $G=(V,E)$ be a graph,  $s, t\in V$, and $F\subseteq E$. Then $F$ is an  inclusionwise minimal odd $\{s,t\}$-join, if and only if it is an odd $(s,t)$-path  or it is of the form $P\cup C$ where $P$ is an even  $(s,t)$-path and $C$ an odd cycle that is edge-disjoint from~$P$ and satisfies $|V(P)\cap V(C)|\le 1$. 
	\end{lemma} 
	
	\begin{proof} Clearly, any odd $(s,t)$-path  and any edge set $P \cup C$ as defined in the statement of the lemma is an inclusionwise minimal odd $\{s,t\}$-join. Conversely, any $\{s,t\}$-join~$F$ is the union of an $(s,t)$-path $P$ and pairwise edge-disjoint cycles. So if $F$ is an inclusionwise minimal odd $\{s,t\}$-join, then it   contains neither even cycles, nor more than one odd cycle. (An even cycle or two odd cycles could be deleted from~$F$, contradicting  minimality.)
		
		If $F$ contains no cycle, then $F=P$ where $P$ is an odd $(s,t)$-path. Otherwise, $F=P\cup C$ where $P$ and $C$ {\em are edge-disjoint}, and $C$ is an odd cycle; $P$ is then an even path, since $|F|$ is odd.  It remains to prove that $|V(P)\cap V(C)|\le 1$. 
		
		Suppose for a contradiction $|V(P)\cap V(C)| > 1$. Traversing ~$P$ from $s$ to~$t$, let $a$ and~$b$ be the first and say the last encountered vertex of $C$, respectively.  Since $|V(P)\cap V(C)| \ge 2$ we have that both $a$, $b$ exist, $a\ne b$, and therefore $a$ and $b$ divide $C$ into two $(a,b)$-paths, $C_1$ and $C_2$. So $P\cup C$ contains  three, pairwise edge-disjoint $(a,b)$-paths:  $P(a,b)$, $C_1$, and $C_2$,  two of which necessarily have the same parity. Deleting those two from $F$ we still get an odd  $\{s,t\}$-join, contradicting the  inclusionwise minimality of $F$.  
	\end{proof}
	
	\smallskip
	\noindent
	{\bf A simple algorithm for MOTJ$_+$ with $|T|=2$.}
	Lemma~\ref{lem:AG} easily yields a simple polynomial algorithm for finding  a minimum-weight odd (or even)  $\{s,t\}$-join $F$. For normalized weights (it is sufficient not to have $0$-weight cycles) such an $F$ is clearly inclusionwise minimal, and thus can be searched in the form given in Lemma~\ref{lem:AG}:
	\begin{itemize}[leftmargin=38pt,itemindent=16pt,parsep=6pt] 
		\item[{\bf Step 1.}] Compute in the input graph a minimum-weight odd  $(s,t)$-path $P_{\textup{odd}}$, a minimum-weight even $(s,t)$-path $P_{\textup{even}}$, and a minimum-weight odd cycle~$C$.
		\item[{\bf Step 2.}] Let $F$ be the shorter one among $P_{\textup{odd}}$ and $P_{\textup{even}} \Delta C$; if their weights are equal, choose arbitrarily.
	\end{itemize} 	
	The correctness of the above algorithm follows from Lemma~\ref{lem:AG}. To see this, it suffices to observe that  $P_{\textup{even}} \Delta C$ is an odd $\{s,t\}$-join, and its weight does not exceed that of any $P'\cup C'$ where $P'$ is an even path and $C'$ an odd cycle edge-disjoint from~$P'$, since $w(P) \le w(P')$, and $w(C)\le  w(C')$. 
	Therefore, by Lemma~\ref{lem:AG} we know that $F$ indeed has minimum weight among all inclusionwise minimal $\{s,t\}$-joins.  
	Furthermore, according to Lemma~\ref{thm:classical}, each of~$P_{\textup{even}}$, $P_{\textup{odd}}$, and~$C$ can be computed in polynomial time. 
	
	\smallskip

	We also introduce another approach for the case $|T| = 2$ that brings us closer to the extension of 
	polynomial solvability of MOTJ$_+$ when a minimum $T$-join of a constant number of components can be constructed.
	Our next algorithm relies heavily on Proposition~\ref{prop:reductocycle}, illustrated by Figure~\ref{fig:reductocycle}. 
	Recall that inclusionwise minimal odd $\emptyset$-joins are cycles, and recall also Lemma~\ref{lem:AG} and Guan's Lemma: 
	\begin{figure}[t]
		\centering
		\includegraphics[scale=1]{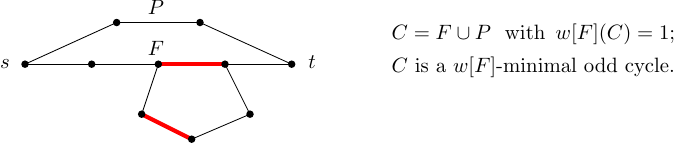}
		\caption{Illustration of Proposition~\ref{prop:reductocycle}. Bold, red lines depict edges of weight~$-1$,  and all the other edges have weight~1. The shortest $(s,t)$-path $F$ is the middle horizontal line; it is even and of weight~$2$. 
			According to Proposition~\ref{prop:reductocycle},  $P=F\Delta C$ is a $w$-shortest odd $\{s,t\}$-join.}
	\label{fig:reductocycle}
\end{figure}

\begin{proposition}\label{prop:reductocycle} 
	Let $w:E\rightarrow \mathbb{R}$ be  arbitrary, and $T\subseteq V$ with $|T|$ even. If 
	$F$ is a $w$-minimum $T$-join and $|F|$ is even, then $F_{\rm odd}$ is a $w$-minimum odd $T$-join if and only if $F_{\rm odd}= F\Delta C$ for some $w[F]$-minimum  odd cycle  $C$.
\end{proposition}

\begin{proof}
	Let $F$ be as in the condition. Then correspondence $F_{\rm odd} \leftrightarrow F_{\rm odd} \Delta F$ is a bijection between odd $T$-joins and odd $\emptyset$-joins. 
	Note further that 
	\[w[F](F \Delta F_{\rm odd})=
	w(F_{\rm odd}\setminus F) - w(F\setminus F_{\rm odd})=w(F_{\rm odd}) - w(F).\]
	Since $w(F)$ is a fixed value, we obtain that $F \Delta F_{\rm odd}$ minimizes $w[F]$ over all odd $\emptyset$-joins exactly if $F_{\rm odd}$ minimizes $w$ over all odd $T$-joins.
	It remains to observe that since $w[F]$ is conservative, a $w[F]$-minimal  odd $\emptyset$-join must consist of a $w[F]$-minimal odd cycle and possibly some additional cycles whose $w[F]$-weight is zero.
\end{proof}

By Proposition~\ref{prop:reductocycle}, finding a $w$-minimal odd $T$-join in an instance $(G,w,T,k)$ of MOTJ can be reduced to finding a $w[J]$-minimum odd cycle in the same graph~$G$ where $J$ is a $w$-minimum $T$-join.
Next we present Proposition~\ref{prop:pathonepath} for shortest odd cycles:
\begin{lemma}\label{lem:onepathgen} Suppose the graph $G=(V,E)$ is conservative with $w:E \rightarrow \mathbb{R}$, and $C$ is a $w$-minimum odd cycle. Then for each connected component~$K$ of~$G[E^-]$, either $C$ and~$K$ are vertex-disjoint, or their intersection is  a path. 
\end{lemma} 
\begin{proof}
	The proof is essentially the same as that of Proposition~\ref{prop:pathonepath}.
	For a contradiction, suppose that there is a connected component~$K$ of~$G[E^-]$ whose intersection with~$C$ is not a path. Then there exist two distinct vertices~$a$ and~$b$ in~$V(C)\cap V(K)$ so that the unique $(a,b)$-path in~$K$, denoted by $K(a,b)$, is edge-disjoint from $C$, and consequently, from both $(a,b)$-paths $C_1$ and $C_2$ into which $a$ and $b$ divides $C$; note that $|C_1|$ and $|C_2|$ have different parity.
	
	Using that $w$ is conservative on~$G$, we get
	$w(C_i \cup K(a,b)) \geq 0$ for both $i=1,2$. Recall also that $w(K(a,b))< 0 $ since every edge in~$K$ has negative $w$-weight.  
	Hence, $w(C_i \cup K(a,b))<w(C_i)$ and $w(C_i)>0$ for $i=1,2$. Therefore $C_i \cup K(a,b)$ for $i=1, 2$ are two cycles of weight less than~$w(C)$, and one of them is odd, a contradiction to the definition of $C$. 
\end{proof}

\begin{corollary}\label{cor:onepath} Let $G=(V,E)$ be a graph with weight function $w:E \rightarrow{R}_0^+$, let $P$ be a $w$-minimum $(s,t)$-path for some vertices $s,t\in V$, and let $C$ be a $w[P]$-minimum odd cycle.  Then either $C$ and $P$ are vertex-disjoint, or their intersection is a path.  
\end{corollary} 

\begin{proof} Since $w$ is non-negative, a $w$-minimum $(s,t)$-path is also a $w$-minimum $\{s,t\}$-join. By 
	Guan's Lemma mentioned in the introduction, $w[P]$ is therefore conservative. Hence, the statement follows directly from Lemma~\ref{lem:onepathgen}.
\end{proof}

\medskip
\noindent
{\bf Another simple algorithm for MOTJ$_+$ with $|T|=2$,} i.e., a second polynomial algorithm  for finding a minimum weight odd $\{s,t\}$-join for a non-negative weight function $w$, using Proposition~\ref{prop:reductocycle}
and Corollary~\ref{cor:onepath}. 

\begin{itemize}[leftmargin=38pt,itemindent=16pt,parsep=6pt] 
	\item[{\bf Step 1.}]
	Compute a $w$-minimum $(s,t)$-path~$J$. If $J$ is odd, return it and stop.
	\item[{\bf Step 2.}]
	Let $C$ be the cycle of smaller $w[J]$-value between
	\begin{itemize}[topsep=0pt]
		\item[(a)] a $w$-shortest odd cycle in $E \setminus J$, 
		\item[(b)] a cycle $J(u,v)\cup P$ with the smallest $w[J]$-weight 
		where $u, v\in V(J)$, $u \ne v$, and $P$ is a $w$-minimum $(u,v)$-path in $E \setminus J$ of parity different from $J(u,v)$. 
	\end{itemize}
	\item[{\bf Step 3.}] Return $J \Delta C$.
\end{itemize}

Computing a $w$-minimum $\{s,t\}$-path in Step~1
can be done in polynomial time by Theorem~\ref{thm:classical}. (In fact, we only need SP$_+$ here, solved e.g., by Dijkstra's algorithm.) 
 Step~2(a) is a SOP$_+$ problem which is solvable in polynomial time by Theorem~\ref{thm:classical}, and so are the SEP$_+$ and SOP$_+$ problems of Step~2(b).  Hence, the above algorithm runs in polynomial time.
 
 Since $w$ is non-negative, the $w$-minimum $(s,t)$-path~$P$ of Step~1 is a $w$-minimum $\{s,t\}$-join, so if it is odd, then the algorithm returns the correct result. 
It remains to prove that {\em the odd cycle~$C$ computed in Step~2 is a $w[J]$-minimum odd cycle}, since then the correctness of the algorithm follows immediately from
Proposition~\ref{prop:reductocycle}. 

Let $C'$ be a $w[J]$-minimum odd cycle.
The edges with negative $w[J]$-weight are exactly the edges of~$J$, and 
so by Corollary~\ref{cor:onepath}, either $C'$ is edge-disjoint from~$J$ (implying that $C'$ is a $w$-shortest cycle in $E \setminus J$), or $C'$ is formed by the union of a non-trivial subpath of $J$ and a path in $E \setminus J$ of different parity between the same vertices. Hence, we obtain an odd cycle with $w[J]$-weight at most~$w[J](C')$ 
either in Step~2(a) or in Step~2(b).

\medskip
It is not hard to see that the same approach works for solving MOTJ$_+$ for the more general case when a connected $w$-minimum $T$-join  is given.  
In the next section, we generalize this algorithm to work without any assumption on the number of components.
The running time of the equivalent SOC, however, is fixed-parameter tractable if the parameter is the number of  connected components of negative edges. 

\subsection{FPT algorithms for SOC and MOTJ\texorpdfstring{$_+$}{+}}
\label{sub:fix}

In this section we show that MOTJ$_+$ is polynomial-time solvable when we are given a  $w$-minimum $T$-joins with a fixed number of connected components. By Proposition~\ref{prop:reductocycle}, this is reduced to 
finding a shortest odd cycle with respect to a conservative weight function with a fixed number of \emph{negative components}, defined as the connected components of~$G[E^-]$, the subgraph spanned by all negative-weight edges. 
The main result of this section is Theorem~\ref{thm:alg-fpt} establishing  that SOC is fixed-parameter tractable with parameter $c$, the number of negative components of the input graph. 
Conjecture~\ref{conj:SOC} remains open.

Finding a shortest odd cycle when there is only one negative component is based on  Lemma~\ref{lem:onepathgen}  via Corollary~\ref{cor:onepath}. 
Even though the assertion in Lemma~\ref{lem:onepathgen} holds for an arbitrary number of components, a naive, ``brute force'' approach based on this lemma provides only an algorithm with $c^cO(n^{2(c+1)})$ running time where $n=|V|$, which is polynomial if $c$ is fixed, but does not confirm fixed-parameter tractability. 
The  structural observations of  Lemma~\ref{lem:notshortest} below make it possible to achieve a running time  of  $2^c O(n^4)$. 

To state Lemma~\ref{lem:notshortest} which, together with the crucial property expressed in Lemma~\ref{lem:onepathgen} will form the basis of our FPT algorithm with parameter~$c$,
we need some additional notation.
We will use the notation $[k]:=\{1, \ldots, k\}$ for $k\in\mathbb{N}$.
Let $K_1, \dots, K_c$ be the edge sets of the negative  components in~$G$.
We define the graph~$G_I$ as $G_I=(V,E_I)$ where $E_I=E^+ \cup \bigcup_{i \in I} K_i$ for any  index set~$I \subseteq [c]$; in particular, $G_\emptyset=(V,E^+)$ and $G_{[c]}=G$.
For $F\subseteq E$ let $I(F):=\{i\in [c]:  F\cap K_i\ne\emptyset\}$ denote the set of indices of negative components having an edge in $F$. 
Whenever we consider any subgraph of~$G$ with weight function~$w$, we will implicitly use the restriction of~$w$ to this subgraph; this will not cause any confusion.

\begin{lemma}\label{lem:notshortest} 
Suppose $G=(V,E)$ is conservative with $w$, $w$ is normal, and   $C\subseteq E$ is a $w$-minimum odd cycle 
partitioned into 
paths $P_1,\ldots, P_m\subseteq E$  ($m\in\mathbb{N}, m\ge 2$) so that the sets~$I(P_i)$, $i\in [m]$, are pairwise disjoint. Then for any family of pairwise disjoint sets $I_i$, $i \in m$, with $I(P_i) \subseteq I_i\subseteq [c]$, statements~(a) and (b) hold for  $G_i:=G_{I_i}$:  
	\begin{itemize}
		
		\item[(a)]  Each $P_i$ for $i\in[m]$ but at most one is shortest among paths in~$G_i$  between the endpoints of~$P_i$, where the exception is also  shortest in~$G_i$ among paths of the same parity as~$P_i$. 
		\item[(b)] Suppose that the path $P_i'$ for each $i \in [m]$ is  shortest among paths in~$G_i$ of the same parity as~$P_i$ between the endpoints of~$P_i$. Then the paths $P_i'$, $i \in [m]$, are pairwise openly disjoint.
	\end{itemize}
	Furthermore, there exists a  partition of~$C$ into $P_1,\ldots, P_m$ and index sets $I_1, \ldots, I_m$ satisfying the above conditions with $m\le 3$ and such that either $P_1=\{e\}$ for some~$ e\in E_+$ or $P_1=C\cap K_j$ for some~$j\in[c]$, and if $m=3$, then $I_1=I(P_1)$, $I_2=I(P_2)$, $I_3=[c]\setminus (I_1\cup I_2)$,   	and~$P_i$ is shortest in~$G_i$ for~$i \in [3]$.  
\end{lemma}

	We call such a partition into two or three paths a \emph{compact} partition.  
	In a compact partition the only path~$P_i$ that may not be shortest in~$G_i$ between its endpoints is the path~$P_1$, and only if $m=2$ and $P_1=\{e\}$ for some~$e \in E^+$,
 when an even path shorter than~$w(e)$ may exist in~$G[E^+]$.
	
	Finally, note that the condition on the disjointness of index sets is a formalization of the requirement that each path of the partition should have ``its own negative components'' that are not used by any of the other paths.  We know from Lemma~\ref{lem:onepathgen} that the intersections with these components are paths. 

\begin{proof} 
	To see~(a), suppose first for a contradiction that there is a path~$P_i'$ in~$G_i$ of the same parity as~$P_i$ and between the same endpoints with weight $w(P_i')<w(P_i)$. Then $C'= (C \setminus P_i) \Delta P_i'$ is an odd $\emptyset$-join.  
	Since $P_i'\subseteq E_{I_i}$, our assumption on the disjointness of~$I_i$ from any $I_j$ where $j \neq i$ implies that $C\setminus P_i$ and~$P_i'$ may share only edges in~$E^+$. 
	As a consequence, $C'$ has weight at most~$w(C)-w(P_i)+w(P'_i)<w(C)$, a contradiction to the definition of $C$. 
	
	To finish the proof of (a), 
 suppose that there exist two distinct indices $i$ and $j$ in $[m]$ so that  $P_i$ and~$P_j$ are not shortest paths between their endpoints in~$G_i$ and in~$G_j$, respectively. Then let $\hat{P}_i$ and~$\hat{P}_j$ be shortest paths between the endpoints of~$P_i$ and $P_j$, respectively, in $G_i$ and~$G_j$.
	We conclude that  the parity of $P_i$ and $P_j$ differs from the parity of $\hat{P}_i$ and $\hat{P}_j$, respectively, since the former two are shortest for their parity but not shortest, while the latter two are shortest.
	Using again our assumptions on $I_i$ and~$I_j$,  we obtain that
	$\hat{C}= C \setminus (P_i \cup P_j) \Delta \hat{P}_i \Delta \hat{P}_j$ is an odd $\emptyset$-join with weight at most~$w(C)-w(P_i)-w(P_j)+w(\hat{P}_i)+w(\hat{P}_j)<w(C)$, a contradiction. 
	
	In order to prove (b), note first that the paths $P_i'$, $i \in [m]$,  must be {\em pairwise edge-disjoint}, since, using similar arguments as  before, we know that any two of them can only share edges of~$E^+$, and if they do have a common edge, then there exists a smaller cycle; more formally: 
 let $i, j\in[m]$ with $i \neq j$, and define  $C' := (C\setminus (P_i\cup P_j) \Delta P_i'\Delta P_j'$, which  is an odd $\emptyset$-join with $w(C')\le w(C)$, so by the minimality of $C$ the equality holds here.   Since  $I(P_i)\cap I(P_j)=\emptyset$, a common edge of $P_i$ and $P_j$ would have positive weight (since $w$ is normal) which would imply $w(C')<w(C)$. Thus, $P_i$ and $P_j$ are edge-disjoint. 
	
Suppose now for a contradiction that 
	$P_i'$ and~$P_j'$ ($i, j \in [m], i \neq j$) are not openly disjoint, so there exists some
	$x\in V(P_i')\cap V(P_j')$ that is an inner vertex of at least one of $P_i'$ and $P_j'$. Then $C'$, defined as above, contains a cycle $C''$ as a non-empty proper subset, so  
	$C''$  and $C'\setminus C''$ partition~$C'$ into two non-empty $\emptyset$-joins, exactly one of which is odd, denote it by~$Q$. Since there are no cycles of weight~$0$ in~$G$ by the normality of~$w$, we get that $w(Q)< w(C')=w(C)$, a contradiction.
	
	Finally, in order to prove the last sentence of the lemma,  choose first $P_2\subseteq C$ so that it satisfies $I(P_2)\cap I(C\setminus P_2)=\emptyset$ and is a shortest path in $G_{I(P_2)}$ between two distinct vertices $u$ and~$v$ on~$C$. 
	To see that such a path exists,  consider any partition of~$C$ into paths~$Q_1$ and~$Q_2$ with $I(Q_1)\cap I(Q_2)=\emptyset$; then by (a) at least one~$Q_i$, $i \in [2]$, is shortest in $G_{I(Q_i)}$.  Moreover, choose~$P_2$ so that it is inclusionwise maximal among all paths satisfying these requirements.
	
	Choose $P_1 \subseteq C \setminus P_2$ so that it is consecutive with~$P_2$ on~$C$, and consists either of one positive edge, or of a negative path that continues until the next positive edge on~$C$. In the latter case, by Lemma~\ref{lem:onepathgen} 
	this path contains the entire intersection of $C$ with the component of $G[E^-]$ containing~$P_1$, and therefore 
	$I(P_1)\cap I(C \setminus P_1)=\emptyset$, in particular  
	$I(P_1)\cap I(P_2)=\emptyset$.
	
	If $P_1\cup P_2=C$, then we are done, otherwise let  $P_3:= C\setminus (P_1\cup P_2)$. Clearly, $I(P_3)$ is   disjoint from $I(P_1)$ and $I(P_2)$, because $I(P_i)\cap I(C\setminus P_i)=\emptyset$ for $i=1,2$. Moreover, defining $I_1, I_2$ and~$I_3$ as in the statement of the lemma, $P_3$ is also a shortest path in $G_{I_3}$, as otherwise the partition of~$C$ into two paths $\{P_1\cup P_2, P_3\}$ would contradict (a), since by the maximal choice of $P_2$ we know that $P_1\cup P_2$ is also not a shortest path  in $G_{I_1 \cup I_2}$.
	Similarly, $P_1$ is also a shortest path in~$G_{I_1}$, as otherwise the partition of~$C$ into two paths $\{P_1, P_2\cup P_3\}$ would contradict (a).
\end{proof}

We are now ready to present the main result of this section.

\medskip
\noindent{\bf FPT-algorithm for SOC with parameter~$c$:}

\begin{itemize}[leftmargin=38pt,itemindent=16pt,parsep=4pt] 
	\item[{\bf Step 0.}] Normalize $w$, and initialize $\mathcal{Q}=\emptyset$.
	\item[{\bf Step 1.}] 
	For all $I\subseteq [c]$, compute a shortest $(x,y)$-path $P(x,y,I)$ in $G_I$ for all $x\ne y\in V$. 
	\item[{\bf Step 2.}] For all $u, v\in V$ with $u \ne v$:
	\begin{itemize}[parsep=2pt,topsep=1pt,]
		\item[{\bf (a)}] if $uv\in E^+$ then let  $R\leftarrow\{uv\}$, and perform (c).
		\item[{\bf (b)}] if $u, v\in V(K_j)$ for some $j\in[c]$ then let $R\leftarrow K_j(u,v)$, and perform (c).
		\item[{\bf (c)}] For all $x\in V$, 
		$I_u\subseteq [c]\setminus I(R)$ and $I_v= [c]\setminus (I_u \cup I(R))$: \\[2pt]
		if $Q=R\cup P(u,x,I_u)\cup P(x,v,I_v)$ is an odd cycle, then add $Q$ to~$\mathcal{Q}$.
	\end{itemize}
	\item[{\bf Step 3.}] If $\mathcal{Q}\neq \emptyset$, then  {\bf return} $Q \in \mathcal{Q}$ with the minimum weight; otherwise {\bf return} ``There is no odd cycle in $G$''.
\end{itemize}

\noindent
{\bf Running time.}
Step~0 can be performed in linear time, as explained in Section~\ref{sec:prelim}. 
Step~1 computes shortest paths for all pairs of vertices in $2^c$ different graphs with conservative weights. By Theorem~\ref{thm:classical}, the SP problem can be solved in polynomial time;
the book by Korte and Vygen describes an $O(n^4)$ time algorithm~\cite[Theorem 12.14]{KV} for this problem.
Step~2  has $O(n^2)$ iterations, and inside each of these (c) in turn checks for at most $2^c n$ edge sets whether it forms an odd cycle. As this takes $O(n)$ time for each set, Step~2 takes altogether $2^c O(n^4)$ time, so this is the total time used by the
 FPT-algorithm:

\begin{theorem}\label{thm:alg-fpt} The above algorithm returns a $w$-minimum odd cycle, if $G$ is non-bipartite, and its running time is $2^cO(n^4)$. 
\end{theorem}

\begin{proof} We have already proved the assertion on the complexity, so let us prove the correctness of our algorithm.
	
	If $G$ is bipartite,  $R \cup P(u,x,I_u)\cup P(x,v,I_v)$ of Step~2 is even for all possible choices,  since it is a closed walk in a bipartite graph. So $\mathcal{Q}$ remains empty, and the algorithm returns a correct answer. Otherwise let $C$ be a shortest odd cycle; we show that the algorithm puts into~$\mathcal{Q}$ an odd cycle of the same weight as $C$, and thus returns a correct solution.
	
	By the final assertion of Lemma~\ref{lem:notshortest}, $C$ admits a compact partition $P_1,\dots,P_m$, where $P_1$ either consists of an edge $uv\in E^+$, corresponding to a choice in Step~2(a) of our algorithm, or $P_1=C\cap K_j$ for some $j\in [c]$, that is, $P_1$ is a negative path between two distinct vertices~$u$ and~$v$ in $K_j$, corresponding to a choice in Step~2(b). Hence, at least once in Step~2 the path~$P_1$ gets chosen as~$R$. Recall that in our compact partition, {\em any other path~$P_i$  ($2 \leq i \leq m\le 3$)  in the partition is shortest in $G_i$} where $G_i=G_{I(P_i)}$.
	
	Now if $m=2$, then $P_2$ is a shortest  $(u,v)$-path in $G_2$. Consider the choice of Step~2(c) for~$P(u,x,I_u)$ with $x:=v$ and $I_u:=I(P_2)$. Since with these choices $P(u,x,I_u)=P(u,v,I_2)$ is also a shortest $(u,v)$-path in $G_2$, we get $w(P(u,x,I_u))=w(P_2)$. Since $w$ is normal, we also know that $P(u,x,I_u)$ has the same parity as $P_2$.
	Since now $P(x,v,I_v)=P(v,v,I_v)$ is a trivial path independently of the choice of $I_v$, we get that $w(P(x,v,I_v))=0$. 
By claim~(b) of Lemma~\ref{lem:notshortest}, we also know that $P_1$ and $P(u,x,I_u)$ are  openly disjoint,
 thus $Q$ is a cycle. 
  Moreover, the weight of~$Q=P_1 \cup P(u,x,I_u) \cup P(v,x,I_v)$ is $w(P_1)+w(P_2)=w(C)$. Hence,
  $Q$  has the same  weight and then by normality also the same parity as $C$, and is therefore a $w$-minimum odd cycle contained in~$\mathcal{Q}$, as claimed.
	
	If $m=3$, then the shortest path $P_2$ in~$G_2$ and the shortest path $P_3$ in~$G_3$ have a common endpoint, denote it by~$x$. Again, setting $I_u:=I(P_2)$ we get that $I_v=[c] \setminus (I(P_1) \cup I(P_2))=I_3$ also holds by our definitions. Moreover, $P(u,x,I_u)$ and $P(x,v,I_v)$  have the same weight and, by the normality of~$w$, the same parity as $P_2$ and $P_3$, respectively. Applying  claim~(b) of Lemma~\ref{lem:notshortest} to 
 the paths~$P_1$, $P(u,x,I_u)$, and $P(x,v,I_v)$, we get that they are mutually openly disjoint. Hence, we can conclude again that $Q=P_1 \cup P(u,x,I_u) \cup P(v,x,I_v)$ is a cycle, and has the same weight and parity as $C$. 
\end{proof}

It is easy to see that the  $w\ge 0$ special case of the FPT-algorithm consists only of
$n$ shortest path computations and does not rely on matchings. (Indeed, then the enumeration of the components of $E^-$ disappears, and  one execution of Dijkstra's shortest path algorithm  computes a shortest path from a given vertex to any other.)  This is not surprising, since it is well-known that an odd walk can be determined by $n$ shortest path computations in an auxiliary graph, both for undirected and directed graphs (see Proposition~\ref{prop:non-neg}).  
The same method is not suitable for determining shortest even cycles, since the proof of Lemma~\ref{lem:notshortest} relies on symmetric differences and $\emptyset$-joins, and heavily uses the fact that a shortest odd $\emptyset$-join contains a shortest odd cycle, while a shortest even $\emptyset$-join is
the empty edge set, having weight~$0$. In undirected graphs shortest even cycles 
for non-negative weights can be of course determined by solving SOP$_+$ problems (solvable in polynomial time according to Theorem~\ref{thm:classical})
for the endpoints of edges. However, for directed graphs the problem is more difficult (see in Section~\ref{sub:classic}, under the 
paragraph ``Digraphs'').  

\smallskip

Theorem~\ref{thm:alg-fpt} has the   immediate consequence  for MOTJ$_+$ that   after computing a $w$-minimum $T$-join~$F$,  {\em a $w$-minimum odd $T$-join can be computed in  $2^{c}O(n^4)$ time where $c$ denotes the number of connected components of $F$. }

Indeed, computing $F$ takes  $O(n^3)$ time for any~$T$, see~\cite[Section 29.2]{SYB}. If   $F$ is odd,   we are done. If not, by Proposition~\ref{prop:reductocycle} the minimum odd $T$-join problem is equivalent to determining a $w[F]$-shortest odd cycle $C$ in the graph $G$, and the set of negative edges of $w[F]$ has at most~$c$ negative components.

Since any inclusionwise minimal $T$-join consists of at most~$|T|/2$ connected components, we also obtain the following:

\begin{corollary}\label{cor:alg-fpt} 
	Given an instance  $(G,w,T)$ of MOTJ where $w$ is non-negative, a  $w$-minimum odd $T$-join (if it exists) can be computed in $2^{|T|/2}O(n^4)$ time.
\end{corollary}

As already mentioned, the fact  that MOTJ$_+$ can be solved in FPT time parameterized by~$|T|$ has already been proved by Cook, Espinoza and Goycoolea~\cite[Proposition 5.3]{CEG}.
Their algorithm runs in time $O(2^{|T|}+|T|^2n^2+n^3)$, so its dependence on $n$ is better than in  Corollary~\ref{cor:alg-fpt}, but  their dependence on $|T|$ is slightly worse.

\section{\texorpdfstring{$\NP$}{NP}-completeness}
\label{sec:NPC}

We present now a well-known $\NP$-complete problem that will be reduced to SOCp. Its planar special case is known to be one of the simplest open disjoint paths problems. 

\medskip\noindent
{\bf BACK AND FORTH PATHS (BFP)}

\noindent {\bf Input}:  A digraph $\hat G=(\hat V, \hat E)$ and $s\ne t \in \hat V$.

\noindent {\bf Question:}  Are there two openly disjoint paths, one from $s$ to $t$, the other from $t$ to $s$?

\begin{theorem}[\!{\cite[Theorem 2]{BFP}}, see also {\cite[p.~1225, footnote~6]{SYB}}]
	\label{thm:BFP}
	BFP is $\NP$-complete. 
\end{theorem}

Before proving the main  $\NP$-completeness results we are interested in, it will be useful to deduce the $\NP$-completeness of the directed versions of SOCp$_+$,  SOP$_+$, that immediately follow  from this theorem, and  already for non-negative weights:

\begin{proposition}[\!\!\cite{LP,Thomassen85}]
	\label{prop:dir-hardness}
	The directed variants of the SOCp$_+$, SECp$_+$, SOP$_+$ and SEP$_+$ problems are all $\NP$-complete.
\end{proposition}

The proof of Lapaugh and Papadimitriou, that of Thomassen and ours were found independently:  the $\NP$-complete problems used for the reductions  slightly differ from one another, but they are all from \cite{BFP}. We include our version of the proof  to show, in a simpler situation, the  starting step of our $\NP$-completeness proof for undirected graphs.

\begin{proof}
	The directed variant of SOCp$_+$ is NP-complete, because 
	given an instance $(G,s,t)$ of BFP,
	subdividing each edge of $G$ and then splitting $t$ into an in-copy~$t_{\rm in}$ and an out-copy~$t_{\rm out}$ in the usual way (with all incoming edges arriving at $t_{\rm in}$ and all outgoing edges leaving from~$t_{\rm out}$, and with a new edge~$t_{\rm in}t_{\rm out}$), 
	there exists an odd cycle going through~$s$ in the constructed digraph if and only if there is a pair of back and forth paths between $s$ and $t$ in the original digraph.
	Now, the directed version of SEP$_+$ is also $\NP$-complete, since finding a shortest odd cycle through~$s$ can be done by finding a shortest even path from $s$ to an in-neighbor of $s$. Clearly, the directed variants of SOP$_+$ and SEP$_+$ are equivalent, because we can flip the parity of all paths starting at a vertex~$s$ by subdividing each edge leaving~$s$; the same trick shows that the directed variants of SOCp$_+$ and SECp$_+$ are equivalent. 
\end{proof}

The proof shows that already the existence versions of the problems in Proposition~\ref{prop:dir-hardness} are $\NP$-hard. 
However, we remark that for planar graphs the complexity of BFP is open \cite[p.~1225, footnote~8]{SYB} and so seems to be the complexity of~SOCp for conservative planar undirected graphs
or SOC$_+$ for directed planar graphs.

The polynomial-time solvability of SOC for non-negative weights is straightforward (Proposition~\ref{prop:non-neg}), but SOC in conservative directed graphs is a more difficult problem because neither the tentative generalizations of Lemmas~\ref{lem:onepathgen} and \ref{lem:notshortest}  hold for directed graphs. There is also no relevant indication  that these problems could be polynomial-time solvable,  contrary to the undirected case. Are they $\NP$-hard?  

\smallskip
We now focus on undirected graphs, and switch to the statement and proof of one of our main messages:

\begin{theorem}\label{thm:SOP}  
	SOCp$_{\pm 1}$ is $\NP$-complete, even when the negative edges form a matching, $k=1$,   and there exists  a vertex~$t$ so that $G-t$ is bipartite. 
\end{theorem}

\begin{proof}
	SOCp$_{\pm 1}$ is clearly in $\NP$. Let the digraph $\hat G=(\hat V,\hat E)$ with vertices $s, t\in \hat V$ be an instance of BFP, and construct from  it an undirected graph as follows. Split each vertex $v\in \hat V\setminus\{t\}$ to an {\em out-copy} $v_1$ and an {\em in-copy} $v_2$, except for leaving $t$ as it is,  but defining $t_1 \leteq  t_2 \leteq  t$. For  each arc $uv\in \hat E$ define an  edge $u_1v_2$ with $w(u_1v_2) \leteq  1$. Furthermore, add an edge $v_1v_2$ for each  $v\in \hat V\setminus \{t\}$ with $w(v_1v_2) \leteq  -1$.
	
	Denote $V_i \leteq  \{v_i: v\in \hat V\}$ for $i=1,2$, and $E \leteq  \{u_1v_2: uv\in \hat E\}\cup\{v_1v_2: v\in \hat V\setminus\{t\}\}$, so that  the constructed (undirected) graph is $G=(V_1\cup V_2, E)$, and let $k \leteq  1$. 
	Clearly, the negative edges form a matching, and thus the weight function~$w$ is conservative. Note that $G-t$ is {\em bipartite, so all odd cycles contain~$t$.}  
	
	\smallskip\noindent{\bf Claim}:    $\forall\,\hat C\subseteq\hat E$ cycle, $s,t\in V(\hat C)$,  $\exists\, C$ cycle in $G$, $w(C)=1$, $s_1\in V(C)$,  and vice versa. 
	
	\smallskip Indeed, let $\hat C\subseteq\hat E$ be a (directed) cycle in~$\hat G$ with $s,t\in V(\hat C)$,  and let us associate with it the (undirected) cycle $C \leteq  \{u_1v_2: uv\in \hat C\}\cup\{v_1v_2: v\in V(C) \setminus \{t\}\}$ in $G$. The cycle $C$ alternates between edges of weight $1$ and $-1$ in every vertex but $t$, so $w(C)=1$,  and $s_1\in V(C)$.    
	
	Conversely, a cycle $C\subseteq E$ in $G$ with $w(C)=1$ and $s_1\in V(C)$ must be an odd cycle due to its weight, so $t\in V(C)$ follows as noted earlier. Moreover, $C-t$ must alternate between edges of weight $-1$ and $1$, so $C$ corresponds to a directed cycle $\hat C\subseteq\hat E$. Since $s_1\in V(C)$ by definition, and we know $t\in V(C)$ as well,  the  cycle  $\hat C$ contains both $s$ and $t$, so the claim is proved.
	
	\smallskip The claim shows that our construction reduces BFP to SOCp$_{\pm 1}$, since a solution of BFP is exactly a cycle $\hat C\subseteq\hat E$ in~$\hat G$ with $s,t\in V(\hat C)$,  and according to the claim such a cycle exists if and only if there exists an odd cycle~$C$ in~$G$ of weight at most~$1$ containing $s_1$; note that an odd cycle of weight at most~1 can have neither weight~0 (due to its parity) nor negative weight (due to  conservativeness), so must have weight exactly~1. The instance $(G, p\leteq s_1, k\leteq 1)$ of SOCp$_{\pm 1}$ to which BFP is reduced satisfies the additional assertions, as checked above, so we can conclude that SOCp$_{\pm 1}$ is $\NP$-complete and already for the family of the claimed particular instances. 
\end{proof}

By simply inspecting the instances of the above proof, the $\NP$-hardness of the following problem of  Lov\'asz (\!\!\cite[Open Problem~27, pp. 517]{SYB}) is an immediate corollary.

\begin{corollary}\label{cor:SOP}  SOP$_{\pm 1}$ is $\NP$-complete, even when the negative edges form a matching,  $k=1$,  and there exists a vertex~$t$ so that $G-t$ is bipartite.
\end{corollary}
\begin{proof} 
	SOCp is the special case of SOP where $s=t$, so we are done. 
	If we want to require $s\ne t$, then with the notation of the proof of Theorem~\ref{thm:SOP}, observe that the instance $(G, s_1, k=1)$  of SOCp  has a `yes' answer if and only if there exists an odd $(s_1,s')$-path of weight $k=1$ in the graph~$G'$ obtained from~$G$ by replacing the edge $s_1s_2$ with an $s's_2$ edge of weight $-1$ for a  new vertex~$s'$. 
\end{proof}

Note that the reduction keeps planarity, but  the complexity of BFP is open for planar graphs, so we do not know the complexity of  SOCp$_+$  for planar graphs. 

\smallskip
Let us now consider the following problem which has a strong, although not immediately straightforward, relationship with the problems we study. 

\medskip
\noindent {\bf DISJOINT SHORTEST PATHS IN CONSERVATIVE GRAPHS (DISP)}

\noindent {\bf Input}: An undirected, conservative graph $G=(V,E)$ with $w:E\rightarrow\{1,-1\}$, $s_1,s_2, t_1, t_2\in V$, and $k\in\mathbb{Z_+}$.

\noindent {\bf Question:} Does $G$ contain two vertex-disjoint  $(\{s_1,s_2\},\{t_1, t_2\})$-paths with total weight at most~$k$? 

\smallskip 
While DISP for non-negative weights is a special case of the well-known minimum cost flow problem, and it is so for conservative digraphs as well, {\em it seems the question has not even been asked for conservative undirected graphs!}  For these, a tentative reduction to digraphs meets the same  obstacle we met for shortest paths in the Introduction (Section~\ref{sec:intro}): directing an edge with negative weight in both directions creates a negative cycle consisting of two arcs.  However, although the undirected shortest path problem (SP) is still solvable in polynomial time even if the methods are more difficult than those applied for directed graphs, this is not the case for DISP. It  turns out to be $\NP$-complete, and essentially for the same reason as SOP or SOCp:  

\begin{corollary}\label{cor:DISP} DISP is $\NP$-complete, even when the negative edges form a matching, and $G$ is bipartite.
\end{corollary}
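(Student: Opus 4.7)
The plan is to reduce from the SOP instances of Corollary~\ref{cor:SOP}: given $(G, s, t, k=1)$ where the negative edges form a matching and $G - t$ is bipartite with bipartition $(A, B)$ and $s \in A$, I will build an equivalent DISP instance. The construction splits $t$ into two vertices $t^A, t^B$: for each edge $tv$ of $G$, re-attach it (with the same weight) as $t^Av$ if $v \in A$, and as $t^Bv$ if $v \in B$. Call the resulting graph $G'$. Then attach a new pendant vertex $t^*$ to $t^B$ by a single edge of weight $+1$, obtaining $G''$. The DISP instance uses the terminals $s_1 \leteq s$, $s_2 \leteq t^B$, $t_1 \leteq t^A$, $t_2 \leteq t^*$, and budget $k' \leteq 2$. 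One readily checks that $G''$ is bipartite, with parts $A \cup \{t^B\}$ and $B \cup \{t^A, t^*\}$; that its negative edges still form a matching (the ones incident to $t$ in $G$ are simply re-attached, and the only new edge is positive); and that $G''$ is conservative, since every cycle of $G''$ corresponds to a cycle of $G - t$, to a cycle through $t$ of $G$, or (if it uses both $t^A$ and $t^B$) decomposes into two cycles through $t$ of $G$ whose weights sum to that of the cycle in $G''$. Note also that $t^*$ is in no cycle at all, being a leaf.

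The key observation is that, since $G - t$ is bipartite with $s \in A$, an $(s,t)$-path in $G$ is odd if and only if its (unique) neighbour of $t$ on the path lies in $A$. This yields a weight-preserving bijection between odd $(s,t)$-paths of $G$ and $(s, t^A)$-paths of $G''$ that avoid $t^B$, obtained by relabelling $t \leftrightarrow t^A$; any such path automatically avoids $t^*$ as well, since $t^*$ is adjacent only to $t^B$.

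For correctness, any feasible DISP solution matches $\{s, t^B\}$ to $\{t^A, t^*\}$ via one of two pairings. The pairing $s \to t^*$, $t^B \to t^A$ is infeasible, because any path to the leaf $t^*$ must pass through its only neighbour $t^B$, conflicting with the path that already starts at $t^B$. So only $s \to t^A$ paired with $t^B \to t^*$ is possible; the second path is forced to be the single edge $t^Bt^*$ of weight $+1$, and the first is therefore an $(s, t^A)$-path of $G''$ that avoids $t^B$ and has weight at most $k' - 1 = 1$. By the key observation, this is equivalent to an odd $(s,t)$-path of weight at most~$1$ in $G$; conversely, any such odd path of $G$ combined with the edge $t^Bt^*$ yields a valid DISP solution of total weight at most~$2$.

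The main design question is how to exploit the four-terminal flexibility of DISP to encode the parity constraint of SOP; the pendant $t^*$ attached to $t^B$ is the crucial device, since it locks the source-sink pairing and thereby forces the nontrivial path from $s$ to $t^A$ to avoid $t^B$—precisely the condition that, under the bipartition of $G-t$, picks out odd $(s,t)$-paths.
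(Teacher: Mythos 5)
Your gadget is internally consistent---the pendant $t^*$ does lock the pairing of terminals, and your ``key observation'' about parity is correct---but the reduction fails at its source. You reduce from SOP instances in which the vertex $t$ whose deletion leaves a bipartite graph is \emph{itself an endpoint} of the sought path. Corollary~\ref{cor:SOP} does not supply such instances: in the hard instances it produces, the vertex whose removal makes the graph bipartite is the unsplit BFP-vertex $t$ of the construction of Theorem~\ref{thm:SOP}, whereas the path endpoints are $s_1$ and $s_1$ (first part) or $s_1$ and $s'$ (second part); the ``apex'' is never an endpoint. Worse, the restricted problem you start from is polynomial-time solvable, and your own key observation shows why: when $G-t$ is bipartite with parts $(A,B)$, $s\in A$, and $t$ is an endpoint, an $(s,t)$-path is odd exactly when its edge at $t$ comes from $A$, so one can delete all edges between $t$ and $B$ and compute a shortest $(s,t)$-path with no parity constraint in the remaining conservative graph---a minimum-weight $\{s,t\}$-join computation, solvable in polynomial time. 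A reduction from a polynomially solvable problem establishes nothing about the hardness of DISP.

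The paper instead reduces directly from BFP, reusing the construction of Theorem~\ref{thm:SOP} but splitting \emph{every} vertex including $t$ and adding the negative edge $t_1t_2$; the resulting graph is genuinely bipartite, a BFP solution corresponds to a weight-$0$ cycle through $s_1$ and $t_1$, and removing the two negative edges $s_1s_2$ and $t_1t_2$ from that cycle yields the two disjoint paths of total weight~$2$. A version of your splitting idea can be salvaged, but you must split the apex rather than an endpoint: in the $s\ne t$ instances of Corollary~\ref{cor:SOP}, both endpoints $s_1$ and $s'$ lie on the same side of the bipartition of $G'-t$, so every odd $(s_1,s')$-path must traverse the apex $t$, entering from one side and leaving from the other; splitting $t$ into $t^A$ and $t^B$ then turns such a path into a pair of vertex-disjoint paths from $\{s_1,s'\}$ to $\{t^A,t^B\}$ of the same total weight. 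As written, however, your argument does not prove the corollary.
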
 

\begin{proof}
	We reduce from BFP using the same construction as in the proof of Theorem~\ref{thm:SOP}
	with the only difference that we split all vertices of the input digraph~$\hat G=(\hat V,\hat E)$, including~$t$, add the edge $t_1t_2$ to $E$, and define  $w(t_1 t_2)\leteq -1$.
Then the resulting graph~$G$ is bipartite, and
$(\hat G,s,t)$ is a `yes'-instance of BFP 
if and only if  there exists a cycle~$C$ of weight~$0$ in~$G$ containing both $s_1$ and $t_1$, which in turn holds 
if and only if there exist two vertex-disjoint $(\{s_1,s_2\},\{t_1, t_2\})$-paths of total cardinality~$k=2$ in~$G$.
\end{proof}
Corollary~\ref{cor:DISP} contrasts the well-known fact that 
finding two disjoint $(\{s_1,s_2\},\{t_1,t_2\})$-paths for some vertices $s_1$, $s_2$, $t_1$, and $t_2$ with minimum total weight
in an undirected graph with non-negative edge weights is a standard classical minimum-cost flow problem \cite{SYB}. 
The example depicted in Figure~\ref{fig:interlaced} gives some intuition on the strong connection between  DISP and our problems SOCp and SOP.

\begin{figure}[t]
\centering
\includegraphics[scale=1]{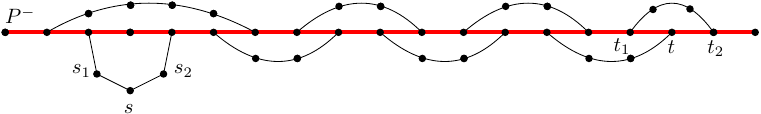}
\caption{An example where edges of the path~$P^-$ have weight~$-1$, shown as red, bold lines, with all remaining edges having weight~1. An odd cycle containing $s$ must also contain $t$, and the unique such cycle yields also a solution for DISP (with vertices $s_1,s_2,t_1,t_2$), and also a shortest odd $(s_1,s_2)$-path. A shortest odd  $\{s_1,s_2\}$-join consists of an $(s_1,s_2)$-path of $4$ edges and weight~$0$, and the odd cycle of~$5$ edges through $t_1$, $t$, $t_2$, altogether  $9$ edges with total weight $1$.
}
\label{fig:interlaced}
\end{figure}

\section{Connections, Questions and Conclusion}\label{sec:conn}

In this section we establish further connections between  the problems we have been studying to some known results and open questions.

\subsection{Classical Results}
\label{sub:classic}

\smallskip\noindent{\bf Forgetting the parity:} We mentioned  in Section~\ref{sec:intro} the simple fact that {\em  a shortest $(s,t)$-path in a conservative graph  (SP) can be determined by finding an inclusionwise minimal  shortest $\{s, t\}$-join, which is in fact equivalent to the minimum-weight $T$-join problem for arbitrary, not necessarily conservative weights.} 
The first, well-known solution of the latter problem by Edmonds~\cite{E} reduces it to  non-negative weights, and then solves it as a weighted matching problem on $T$, see also~\cite[Section~29.2]{SYB}. 

\smallskip\noindent{\bf Non-negative weights:}  For polynomial algorithms to various problems see Theorem~\refeq{thm:classical}. Note that SOC$_+$ is much easier than SOC 
 for both undirected and directed graphs: there is a well-known method for solving it via $n$ shortest path computations in an auxiliary graph. 
We state and prove the exact complexity result for comparison and further reference in Proposition~\ref{prop:non-neg} below.
Strangely, the Odd Path Polyhedron (the ``dominant'' of odd paths, and the related integer minimax theorem~\cite{S2}, see also \cite[Chapter 29.11e]{SYB}) have been determined much later.

\begin{proposition}[\!\!{\cite[Chapter 8.3]{GrLov88}}]
\label{prop:non-neg}
Give an undirected or directed graph with non-negative weights, a shortest odd cycle  can be determined with at most $n$ executions of Dijkstra's algorithm, that is, in  $O(mn+n^2\log n)$ time. 
\end{proposition}

\begin{proof}
Let $G=(V,E)$ be the input graph with edge-weight function $w$. If $G$ is directed, then double the vertex set and the edge set of~$G$ by taking two distinct copies $v_1$ and $v_2$ of each $v\in V$, and for each $uv\in E$ adding  edge $u_1v_2$ with the same weight as $uv$.  It is well-known and easy to see that the  shortest among all $(v_1,v_2)$-paths, $v\in V$, in the resulting (undirected) graph yields a shortest odd cycle in~$G$. 
The complexity of finding $n$ shortest paths for non-negative edge-weights takes $n$ executions of Dijkstra's algorithm. 

For undirected graphs the problem can be reduced to directed graphs by taking each edge in both directions.
\end{proof}
Even though SEC$_+$ is slightly more difficult than SOC$_+$ in undirected graphs it can obviously be solved with $|E(G)|$ shortest odd path computations SOP$_+$, solved in polynomial time (Theorem~\refeq{thm:classical}).

\smallskip\noindent{\bf Digraphs:}  In Section~\ref{sec:NPC} (mainly after Theorem~\ref{thm:BFP}) we mentioned the complexity for digraphs of the problems analogous to those we are studying. Similar reductions work for the shortest odd and even cycle problems through a given vertex (or equivalently, an edge) in a directed graph, proving that these problems are $\NP$-hard~\cite{LP},  \cite{Thomassen85}, see Proposition~\ref{prop:dir-hardness}.
While for undirected graphs the shortest odd or even cycles are both similarly easy  to determine if all weights are non-negative, this is not the case for directed graphs. 
As we have seen in Proposition~\ref{prop:non-neg}, SOC$_+$ for directed graphs is as easy as in undirected graphs. 
By contrast,  SEC$_+$ for directed graph is inherently more difficult, and its complexity is not completely settled: finding any even  (directed) cycle has been an open problem for more than two decades, before solved by Robertson et.\ al.~\cite{RST99} and McCuaig~\cite{McCuaig} independently, 
and the problem of finding a shortest even cycle has been solved very recently by Bj\"orklund et al.~\cite{BHK}, but  only for unweighted digraphs and with a randomized algorithm.  

Two problems more closely related to our work also remain open: 

\begin{problem}
What is the complexity of SOC in digraphs with conservative weights?
\end{problem}

The feelings are not really oriented towards polynomial-time solvability, since nothing similar to Lemma~\ref{lem:onepathgen} seems to be true, driving the search towards enumeration.  

\begin{problem}
Is  SOCp$_+$ polynomially solvable in planar directed graphs? More generally, what is the complexity for planar graphs of the problems proved to be $\NP$-hard (for undirected and directed graphs) in this article?
\end{problem}
The source of this questioning is that BFP is open for planar graphs \cite[p.~1225, footnote~8]{SYB}.  

\medskip\noindent {\bf Odd $T$-joins}: Their properties with respect to packing and covering have been intensively studied in terms of the ``idealness'' (integrality) of their blocking polyhedra. Idealness roughly means that  good characterization (minimax) theorems hold for the minimization of odd $T$-joins {\em for non-negative weight functions}. 

The corresponding algorithms and complexity results have been analyzed in Section~\ref{sec:Tractable}, where we anticipated that non-negativity is not an essential condition in this case. We provide a precise proof  below for the equivalence of arbitrary weight functions with non-negative ones (Theorem~\ref{thm:equiv}).

\smallskip\noindent {\bf Max Cut}:
The ``min side" of the mentioned minimax theorems concerns transversals of odd $T$-joins
which, in the simplest case of odd cycles (i.e., $T=\emptyset$),
are easily seen to be exactly the complements of cuts:  their minimization is equivalent to the \textsc{Maximum Cut} problem, one of the sample $\NP$-hard problems. However, for planar graphs the duality between faces and vertices  reduces this problem to the shortest $T$-join problem \cite{Bar80}, solving \textsc{Maximum Cut} for planar graphs;    for graphs  embeddable into the projective plane  the corresponding reduction is to MOTJ, and only a partial solution could be given to the corresponding special case of MOTJ \cite{MaxCut}. 

\smallskip\noindent {\bf SOC versus SOCp}: 
SOC can clearly be reduced to SOCp but the opposite reduction seems to organically resist. This is an analogous situation to the problem of finding a minimum-weight odd {\em hole} (an induced cycle of cardinality at least four) through a given vertex is $\NP$-complete~\cite{Bienstock}, while without the requirement of containing a given vertex it has been recently proved to be polynomially solvable~\cite{CS3}. 

\medskip
The applications and relevance of the SOC and MOTJ problems and signs of their tractability, mentioned in Section~\ref{sec:Tractable} and leading to Conjecture~\ref{conj:SOC}, 
makes it interesting to clarify their polynomial equivalence, which we do in Section~\ref{sub:oddT-join}.

\subsection{Equivalence of SOC and MOTJ}\label{sub:oddT-join}

In this section we show that weighted optimization problems on odd $T$-joins  
are actually polynomially equivalent to their special case for conservative weight functions, which in turn can be shown to be equivalent to the case where $w$ is restricted to be non-negative, or $T$ to be empty.

\begin{theorem}\label{thm:equiv}
The following problems are polynomially equivalent:
\begin{itemize}
	\setlength\itemsep{-2pt}
	\item[(i)] \vspace{-4pt} MOTJ;
	\item[(ii)] MOTJ with conservative weights;
	\item[(iii)] MOTJ$_+$;
	\item[(iv)] MOTJ with conservative weights for  $T=\emptyset$; 
	\item[(v)] SOC with conservative weights. 
\end{itemize}
\end{theorem}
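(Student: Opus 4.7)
My plan is to establish the cycle of polynomial reductions
\[ \text{(iii)}\leq_p\text{(ii)}\leq_p\text{(i)}\leq_p\text{(iv)}\leq_p\text{(iii)}, \]
together with the direct equivalence (iv)\,$\equiv$\,(v). The three leftmost $\leq_p$'s are trivial since each problem is a syntactic special case of the next. The equivalence of (iv) and (v) I would argue as follows: for conservative~$w$ every odd $\emptyset$-join decomposes into cycles of non-negative weight, at least one of which is odd, hence the optima of~(iv) and~(v) agree on $(G,w)$ and an optimal odd cycle can be extracted as any odd component of an optimal odd $\emptyset$-join. So the real content lies in the two non-trivial reductions (i)\,$\leq_p$\,(iv) and (iv)\,$\leq_p$\,(iii).

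\textbf{(i)\,$\leq_p$\,(iv).} Given an instance $(G,w,T)$ of MOTJ with arbitrary~$w$, I would first compute a $w$-minimum $T$-join~$F$ in polynomial time via weighted matching~\cite{E}. If $|F|$ is odd, then $F$ is already a minimum odd $T$-join. Otherwise, by the standard optimality characterization $w[F]$ is conservative, and Proposition~\ref{prop:reductocycle} tells us that a $w$-minimum odd $T$-join has the form $F\Delta C$ for some $w[F]$-minimum odd $\emptyset$-join~$C$; such a~$C$ can be taken to be a single odd cycle and is produced by one call to the (iv)-oracle on $(G,w[F])$. I then return $F\Delta C$.

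\textbf{(iv)\,$\leq_p$\,(iii).} Given a conservative instance $(G,w)$ of~(iv), set $N=\{e\in E:w(e)<0\}$, $w'=|w|$, and $T_0=\{v\in V:d_N(v)\text{ odd}\}$. I would use the symmetric-difference bijection $J\mapsto J\Delta N$ between $\emptyset$-joins and $T_0$-joins of~$G$, which satisfies
\[ w(J)=w'(J\Delta N)-w'(N),\qquad |J\Delta N|\equiv |J|+|N|\pmod 2, \]
so odd $\emptyset$-joins correspond bijectively (and weight-order-preservingly, up to the additive constant $-w'(N)$) to $T_0$-joins of a fixed cardinality parity determined by $|N|\bmod 2$. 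When $|N|$ is even, this parity is odd and I would invoke the (iii)-oracle directly on the non-negative instance $(G,w',T_0)$. When $|N|$ is odd, the target parity is even, so I would first adjoin two new vertices $v_1,v_2$ joined by a single edge of weight~$0$ and set $T_0':=T_0\cup\{v_1,v_2\}$: this pendant edge is forced into every $T_0'$-join and flips the parity back to odd, after which one call to the (iii)-oracle on the augmented instance suffices. From the returned minimum odd $T_0$-join I would recover an optimal odd $\emptyset$-join of~$G$ and then extract an odd cycle component of the same weight, as in the discussion of (iv)\,$\equiv$\,(v). The principal obstacle is this last reduction; the only delicate point is the two-case parity bookkeeping together with the pendant-edge gadget in the odd-$|N|$ case, but both are mechanical once the transformation $J\mapsto J\Delta N$ has been identified.
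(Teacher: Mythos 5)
Your proof is correct and follows essentially the same route as the paper: the trivial specializations, Proposition~\ref{prop:reductocycle} applied twice (once with a minimum $T$-join $F$ to reduce general MOTJ to the conservative $T=\emptyset$ case, once with $N=E^-$ and $T_0$ to reduce that case to non-negative weights), and the extraction of an odd cycle from a minimum odd $\emptyset$-join. The only cosmetic differences are that the paper closes the cycle of reductions through (v) rather than (iv) and handles the parity issue by attaching a pendant edge of weight $-1$ to the conservative instance instead of your weight-$0$ gadget on the non-negative side; both devices work, and you are in fact slightly more careful than the paper in treating the case where the minimum $T$-join $F$ is already odd.
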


\begin{proof}  A polynomial algorithm for $(i)$, i.e., MOTJ in general, clearly implies one for $(ii)$, which,  in turn, implies one for $(iii)$.

To prove the polynomial-time solvability of $(iv)$ from that of $(iii)$, consider the input of MOTJ with $T=\emptyset$ consisting of a graph~$G=(V,E)$ and a conservative $w$. 
We can assume that $G$ contains an even number of negative edges, since otherwise we can simply add to~$G$ an edge of weight~$-1$ incident to a new vertex. 
Define now a non-negative weighted  instance~$(G,|w|,T)$ of MOTJ with  $T\leteq \{v\in V: d_{E^-} (v) \hbox{ is odd} \}$ where $E^-\leteq \{e\in E: w(e)<0\}$. Then $E^-$ is a $|w|$-minimal $T$-join, and it is even. Now by Proposition~\ref{prop:reductocycle}, $J$ is a $|w|$-minimal odd $T$-join if and only if $C\leteq J\Delta E^-$ is a $w=|w|[E^-]$-minimal odd $\emptyset$-join. Hence, an algorithm for~$(iii)$ applied to~$(G,|w|,T)$ yields a solution for our instance~$(G,w)$ of~$(iv)$.

The claim that polynomial-time solvability of~$(iv)$ implies the same for~$(v)$ follows by noting that a solution for~$(v)$ can be obtained from a solution for~$(iv)$ with the same input instance by deleting the $0$-weight even cycles, and possibly all but one $0$-weight odd cycle.

We have thus asserted the path of implications from the polynomial-time solvability of $(i)$ to that of $(v)$.
A polynomial-time algorithm for~$(i)$ follows from one for~$(v)$ by Proposition~\ref{prop:reductocycle}, since a shortest odd cycle for conservative weights is always a minimum-weight odd $\emptyset$-join.
\end{proof}

Note that (iv) is a specialization of (ii) to the  special case $T=\emptyset$, and actually any special $T$ can play this role of implying a polynomial solution to (i)-(v). (Indeed, to prove  this,  one only has to change   $w$ to $w[J]$ in the proof of (iv),  where $J$ is a $w$-minimum $T$-join, and then $E^-$ is changed to $E^-\Delta J$. In the proof we had $T=\emptyset$ and $\emptyset$ is a shortest $\emptyset$-join in a conservervative graph.)  Figure~\ref{prop:dir-hardness} illustrates how much easier it is to find a shortest $\{s,t\}$-join than to find a shortest $(s,t)$-path in a graph with a conservative weight-function.

Restricting  MOTJ with  $|T|\le 2$ and requiring at the same time non-negative weights results in an easy problem, as we have shown in Section~\ref{sub:algoT2}.
However, if only one of $T$ and~$w$ is restricted, then the general problem can be reduced to these (seemingly) more special ones, as stated by Theorem~\ref{thm:equiv}. The case where the absolute values of the weights are  $1$ are not proved to be essentially easier than   general weights,  for any of the problems.

\subsection{Conclusion}
\label{sub:conc}
The MOTJ problem is a relevant combinatorial optimization problem that may be solvable in polynomial time. The complexity of this problem remains open, but we proved that SOC, polynomially equivalent to MOTJ$_+$ or  MOTJ, 
is fixed-parameter tractable when parameterized by the number of components of negative edges. 
If negative weights are also allowed, then finding a minimum-weight odd  $\{s,t\}$-join is already equivalent to general MOTJ. 

We also proved that the related SOCp, SECp, SOP, SEP, and DISP problems in conservative undirected graphs are  $\NP$-complete,  answering a long-standing question of  Lov\'asz \cite[Problem~27]{SYB}, and we exhibited some related, polynomial algorithms.  At the same time we pointed at three  open  challenges for undirected and directed graphs, one of which is MOTJ itself. 

Another interesting research direction is now to study the parameterized complexity and approximability of the SOP problem both for directed and undirected graphs and its other $\NP$-hard variants. Some initial FPT results  have been achieved by part of our research group formed during the 12th Eml\'ekt\'abla Workshop, G\'ardony, Hungary, 2022~\cite{GardonyGroup}. 

\subsection*{Acknowledgement}
The collaboration of the authors has been hosted and supported by the 12th Eml\'ekt\'abla Workshop, G\'ardony, Hungary, July 2022; we would like to express our gratitude for the organizers.  
Also, we would like to thank Yutaro Yamaguchi for initiating a research group on the problem of finding shortest odd paths in conservative graphs, and Alp\'ar J\"uttner, Csaba Kir\'aly, Lydia Mendoza, Gyula Pap as well as Yutaro for the enjoyable discussions. 
Ildik\'o Schlotter acknowledges the support of the Hungarian Academy of Sciences under its Momentum Programme (LP2021-2), and the Hungarian Scientific Research Fund (OTKA grants~K128611 and K124171).


\begin{thebibliography}{99}
\addtolength{\itemsep}{-3pt}
\bibitem{A}
A.~Abdi. The cycling property for the clutter of odd $st$-walks. Master's thesis, University of Waterloo, 2014.
\bibitem{AG}
A.~Abdi, B.~Guenin. Packing odd $T$-joins with at most two terminals, {\em Journal of Graph Theory}, 87(4):587--652, 2018.
\bibitem{Bar80} F.~Barahona. On the complexity of max cut. Rapport de recherche No 186, Math\'ematiques Appliqu\'es et Informatique, Universit\'e Scientifique et M\'edicale de Grenoble, 1980. 
\bibitem{CEG}W.~Cook, D.G.~Espinoza, M.~Goycoolea.
Generalized Domino-Parity Inequalities for the Symmetric
Traveling Salesman Problem, {\em Mathematics of Operations Research}, 
Vol. 35, No. 2, May 2010, pp. 479–493.
\bibitem{CS3}
M.~Chudnovsky, A.~Scott, P.~Seymour, and S.~Spirkl. Detecting an odd hole. {\em Journal of the ACM}, 67(1), Article~5, 2020.
\bibitem{BFP} S.~Fortune, J.~Hopcroft, and J.~Wyllie. The directed subgraph homeomorphism problem, {\em Theoretical Computer Science,} 10(2):111--121, 1980.
\bibitem{Bienstock}
D.~Bienstock. On the complexity of testing for odd holes and induced odd paths. {\em Discrete Mathematics}, 90(1):85--92, 1991.
\bibitem{BHK}
A. Bj\"orklund, T. Husfeldt and P. Kaski. The shortest even cycle problem is tractable. In {\em Proc. of the 54th Annual ACM SIGACT Symposium on Theory of Computing (STOC 2022)}, ACM Press, pp. 117--130, 2020.
\bibitem{MaxCut} \'E.~Colin de Verdi\`ere, F.~Meunier,  A.~Seb\H{o}, and M.~Stehl\'{i}k.  Max Cut in projective planar graphs: homologic $T$-joins and the matching lattice. In preparation.
\bibitem{E} J.~Edmonds. The Chinese postman's problem. {\em Bulletin of the Operations Research Society of America}, 13, B-73, 1965. 
\bibitem{Geelen} 
J. Geelen and R. Kapadia.
Computing girth and cogirth in perturbed graphic matroids. {\em Combinatorica}, 38(1):167--191, 2018.
\bibitem{GrLov88}
M. Gr\"otschel, L. Lov\'asz, and A.J. Schrijver.
{\em Geometric Algorithms
	and Combinatorial Optimization}. Springer, Berlin, 1988.
\bibitem{GP} 
M. Gr\"otschel and W. Pulleyblank. Weakly bipartite graphs and the max-cut problem.
{\em Operations Research Letters}, 1(1):23--27, 1981.
\bibitem{G} M.~Guan. Graphic programming using odd or even points. {\em Chinese Mathematics,} 1:273--277, 1962.
\bibitem{Guenin1}
B.~Guenin. A characterization of weakly bipartite graphs. {\em Journal of Combinatorial Theory}, Series B (83):112--168, 2001. 
\bibitem{Guenin}
B.~Guenin. Integral polyhedra related to even-cycle and even-cut matroids. {\em Mathematics of Operations Research}, 27(4):693--710, 2002.
\bibitem{GardonyGroup}  A. J\"uttner, Cs. Kir\'aly, L.M. Mendoza-Cadena, Gy. Pap, I. Schlotter, and Y. Yamaguchi. Parameterized complexity of finding a shortest odd path in conservative undirected graphs. In preparation.  
\bibitem{KV}
B. Korte, J. Vygen.
{\em Combinatorial Optimization}.
6th edition, Springer Berlin, Heidelberg, 2018.
\bibitem{LP}
A.S. Lapaugh and C.H. Papadimitriou.
The even-path problem for graphs and digraphs.
{\em Networks}, 14(4):507--513, 1984.
\bibitem{McCuaig}
W. McCuaig. P{\'o}lya's permanent problem." {\em Electronic Journal of Combinatorics}, 11(1), Article R79, 2004.
\bibitem{RST99}
N. Robertson, P.D. Seymour and R. Thomas.
Permanents, Pfaffian orientations, and even directed circuits.
{\em Annals of Mathematics}, 150(3):929--975, 1999.
\bibitem{SYB}
A.~Schrijver. {\em Combinatorial Optimization}. Springer-Verlag Berlin Heidelberg, 2003. 
\bibitem{S2}
A.~Schrijver and P.D.~Seymour.  Packing odd paths. {\em Journal of Combinatorial Theory, Series~B}, 62:280--288, 1994.
\bibitem{PDS}
P.D.~Seymour.  Matroids and multicommodity flows, {\em European Journal of Combinatorics}, 2:257--290, 1981.
\bibitem{Thomassen85}
C. Thomassen. Even cycles in directed graphs. {\em European Journal of Combinatorics}, 6(1):85--89, 1985.
\end{thebibliography}
\end{document}